\definecolor{Purple}{HTML}{7E2F8E}
\definecolor{DarkBlue}{HTML}{0072BD}
\definecolor{mycolor1}{rgb}{0.4660,0.6740,0.1880}
\definecolor{mycolor2}{rgb}{0.8500,0.3250,0.0980}
\definecolor{mycolor3}{rgb}{0.9290,0.6940,0.1250}
\definecolor{mycolor4}{rgb}{0.4940,0.1840,0.5560}
\definecolor{mycolor5}{rgb}{0.4660,0.6740,0.1880}
\newtheorem{definition}{Definition}
\newtheorem{theorem}{Theorem}
\newtheorem{lemma}{Lemma}
\title{\LARGE \bf
String Stable Integral Control of Vehicle Platoons with Actuator Dynamics and Disturbances
}
\author{Guilherme Fróes Silva, Alejandro Donaire, Aaron McFadyen and Jason Ford
\thanks{G. F. Silva, A. McFadyen and J. Ford are with the School of Electrical Engineering and Robotics, Queensland University of Technology, 2 Geroge St, 4000, QLD, Australia. {\tt\small (g.froessilva, aaron.mcfadyen, j2.ford)@qut.edu.au}}%
\thanks{A. Donaire is with the School of Engineering, University of Newcastle, University Drive, 2308, NSW, Australia. {\tt\small alejandro.donaire@newcastle.edu.au}}%
}
\newcommand{\realnumbers}{\ensuremath{\mathbb{R}}}
\begin{document}

\maketitle
\thispagestyle{empty}
\pagestyle{empty}

\begin{abstract}
This paper presents the design of an integral controller for vehicle platoons with actuator dynamics. The proposed controller ensures string stability with disturbances and simultaneously compensates for constant disturbances through integral action. Sufficient conditions for string stability are satisfied by the use of a suitable state transformation. The proposed controller guarantees disturbance string stability for a prescribed time constant of the actuator dynamics, and we show through simulation that platoons with faster dynamics are also made disturbance string stable.
\end{abstract}

\section{INTRODUCTION}

Systems with multiple agents are advantageous for a wide range of applications, such as general networked systems \cite{Stuedli2017}, cooperative systems \cite{Arcak2007}, and coordination of aerial vehicles \cite{Cai2011}. 
Naturally, controlling a networked system of multiple agents is more challenging than controlling individual agents. Stability, \emph{e.g.} in a Lyapunov sense, is often established for individual agents \cite{Ogata2010}, whereas string stability is desirable when agents are networked in one dimension \cite{Swaroop1996}, and mesh \cite{Pant2002} or swarm \cite{Cai2011} stability is sought for in higher dimensions.

In transportation systems, the action of grouping vehicles into platoons increases traffic throughput \cite{Arem2006} and improves fuel consumption efficiency \cite{Naus2010}. Generally, the vehicles communicate with their neighbours (through local measurements) and, optionally, receive reference information through communication channels. 
In this paper, we show a controller design for bidirectional platoons of heterogeneous vehicles with actuator dynamics that guarantees disturbance string stability whilst rejecting constant disturbances.

Research on vehicle platooning dates back to 1960 \cite{Levine1966}, when Levine and Athans proposed an optimal centralised controller for a string of moving vehicles. Later on, it was observed that disturbances and initial condition perturbations could lead to an effect which would amplify state errors down the string. The property that prevents this effect from happening is called string stability \cite{Peppard1974}. That is, an interconnected system is termed string stable if, and only if, disturbances (and initial condition perturbations) are attenuated from one agent to the other \cite{Swaroop1996,Besselink2017}. 
There are, however, many definitions of string stability in the literature, which depend on the system's communication structure, formation (or spacing policy), and node dynamics (or vehicle dynamics). For an exhaustive review, check \cite{Feng2019} and reference within. 

The communication structure of a platoon can be, for instance,  bidirectional \cite{Knorn2014,Ferguson2017,Herman2017}, where data flows from preceding and following vehicles, or predecessor following \cite{Naus2010,Rogge2008}, where data flows from preceding vehicles only, and possibly reference (leader) information \cite{Monteil2019,Pant2001}. Bidirectional strings can also take information from preceding and following neighbours asymmetrically \cite{Herman2017,Monteil2019}.
The platoon formation is dictated by the spacing policy, be it to maintain constant distance or constant headway time between agents \cite{Darbha1999}. It was shown that, under a constant spacing policy, string stability cannot be achieved with a limited communication range even when using integral action \cite{Yadlapalli2006,Darbha2010}. In
\cite{Seiler2004}, leader position broadcasting avoids disturbance amplification, while using only local relative measurements also leads to string instability for any linear controllers \cite{Barooah2005}. 

Finally, the node dynamics, or vehicle dynamics in platooning literature, can be a second order (double integrator) model \cite{Wit1999,Naus2010}, third order with actuator dynamics \cite{Bian2019,Zhou2005,Shaw2007}, and nonlinear model \cite{Knorn2014, Silva2020, Herman2017,Monteil2019}. The actuator dynamics captures the vehicles' power-train time lag and its impact in stability has been studied in \cite{Bian2019}. Furthermore, a platoon is termed homogeneous if vehicles have equal dynamics and heterogeneous otherwise.

For heterogeneous platoons with nonlinear dynamics and constant spacing policy, a port-Hamiltonian description with integral action addition was proposed to guarantee a weaker form of string stability, coined ``weak $L_2$ string stability'' \cite{Knorn2014}. However, their approach required communication between vehicles, which was later relaxed \cite{Ferguson2017}.
Although still in the ``weak $L_2$ string stability'' setting, it was shown that asymmetry in coupling improves platoon performance \cite{Herman2017}.

Recently, a definition of disturbance string stability was proposed in \cite{Besselink2017}, and sufficient conditions for this properties were given in \cite{Monteil2019}. These sufficient conditions allow for control design of car platoons.
Integral action addition was proposed to incorporate constant disturbance rejection \cite{Silva2020}. 

The contribution of this paper is the design of a string stable integral controller capable of rejecting disturbances for heterogeneous platoons of vehicles using a third-order model with actuator dynamics, following a constant spacing policy. We show that the controller designed using the sufficient conditions from \cite{Monteil2019} and integral action addition \cite{Silva2020} allows string stable control of platoons subject to actuator dynamics.

This paper is organised as follows. In Section \ref{sec:problem}, we present the vehicle platoon dynamics and the control objectives. In Section \ref{sec:control}, we show the control design procedure. Numerical results and simulation studies are presented in Section \ref{sec:results}. Finally, conclusions are drawn in Section \ref{sec:conclusions}.

\section{VEHICLE PLATOONS} \label{sec:problem}
A platoon with $N\geq1$ vehicles can be described by the following set of equations \cite{Zhou2005},
\begin{align}
	\begin{split}\label{eq:model0}
	\dot{q}_i &= v_i \\
	\dot{v}_i &= {m^{-1}_i}f_i + {m_i^{-1}}\bar{d}_i \\
	\dot{f}_i &= -\tau_i^{-1}f_i + \tau_i^{-1}\bar{u}_i,
	\end{split}
\end{align}
for all $i=\{1,\dots,N\}$, where $q_i \in \realnumbers$, $v_i \in \realnumbers$, $m_i \in \realnumbers$ and $f_i \in \realnumbers$ are the position, velocity, mass, and actuator force of the $i$th vehicle, respectively. The acceleration of the vehicles is directly affected by the disturbance $\bar{d}_i \in \realnumbers$, which can be decomposed into time-varying $w_i(t) \in \realnumbers$ and constant $\bar{w}_i \in \realnumbers$ disturbances, such that $\bar{d}_i = w_i(t) + \bar{w}_i$. The time constant $\tau_i \in \realnumbers$ accounts for the power-train time lag of vehicle $i$. Finally, the control input is $\bar{u}_i \in \realnumbers$. This model captures the main dynamics for vehicle platoons and is widely used in the literature (see e.g. \cite{Herman2017,Zhou2005,Bian2019,Darbha2017}). We define the state vector $x_i = [q_i \; v_i \; f_i]^T\in \realnumbers^3$ and also define $x_0$ as the virtual agent (or reference) to be followed. The dynamics \eqref{eq:model0} can be written in compact form
\begin{align}
	\dot{x}_i = \phi_i(x_i) + u_i + d_i
	\label{eq:cloriginal}
\end{align}
where $\phi_i(x_i) = [v_i \; (m_i^{-1}f_i) \; (-\tau_i^{-1}f_i)]^T$, $u_i = [0 \; 0 \; \tau_i^{-1}\bar{u}_i]^T$, and $d_i=[0 \; m^{-1}\bar{d}_i \; 0]^T$.

Interconnected systems of this kind may suffer from an effect called string instability, in which disturbances are amplified along the string. To overcome this effect, we require that the closed-loop system is string stable and the vehicles assume a desired configuration $x_i^\star = [q_0 - \delta_{i,0} \; v_0 \; 0]^T$, where $\delta_{i,0} = \sum_{j=0}^{i-1}\delta_{j+1,j}$ is the distance of vehicle $i$ to the reference position $q_0$, and $\delta_{j+1,j}$ is the desired distance between vehicles. The desired configuration verifies $\dot{x}_i^\star = \phi_i(x_i^\star)$ and it is a solution of the system in the absence of disturbances.

\subsection{String Stability}
We define string stability according to \cite{Besselink2017}, where the disturbance string stability (DSS) definition was proposed.

\begin{definition}[Disturbance String Stability]\label{def:ss}
	The system \eqref{eq:cloriginal} is said to be disturbance string stable if there exists a $\mathcal{KL}$ function $\gamma$ and a $\mathcal{K}$ function $\beta$ such that, for any disturbance $d_i$ and initial conditions, we verify, for all $t>0$,
	\begin{align}
		\label{eq:dss}
		\begin{split}
		\sup_i\left\lvert x_i(t)-x_i^\star(t) \right\rvert_2&\leq\gamma\left(\sup_i\left\lvert x_i(0)-x_i^\star(0) \right\rvert_2,t\right)\\&+\beta\left(\sup_i\left\lVert d_i(t) \right\rVert_{\infty}\right).
		\end{split}
	\end{align}
\end{definition} 

This definition is equivalent to well-known string stability definition by Swaroop and Hedrick \cite{Swaroop1996}, which is expressed in $\epsilon-\delta$ form. 
However, the Definition \ref{def:ss} also accounts for external disturbances acting on the vehicles. 

\subsection{Problem Formulation}
Sufficient conditions for DSS of a general class of systems in closed-loop with static controllers were presented in \cite{Monteil2019}. As discussed in \cite{Silva2020}, these conditions can also be used to ensure DSS of vehicle platoons in closed-loop with dynamic controllers that have the form
\begin{align}
	\begin{split}
	\bar{u}_i &= {h}_{i,i-1}(x_i,x_{i-1}) + \varepsilon_i{h}_{i,i+1}(x_i,x_{i+1}) \\&+ {h}_i^0(x_i,x_0) + k\zeta_i, \label{eq:ac_controller}
	\end{split}\\
	\begin{split}
	\dot{\zeta}_i &= {g}_{i,i-1}(x_i,x_{i-1}) + \varepsilon_i{g}_{i,i+1}(x_i,x_{i+1}) \\&+ {g}_i^0(x_i,x_{0}),
	\end{split}
	\label{eq:integraldynamics}
\end{align}
where the functions ${h}(\cdot) \in \realnumbers$ and ${g}(\cdot) \in \realnumbers$ represent smooth couplings between neighbour vehicles. The integral gain $k \in \realnumbers$ along with the state $\zeta_i \in \realnumbers$, with dynamics \eqref{eq:integraldynamics}, add integral action to the controller \cite{Silva2020}. 

The problem is to find the controller functions $h(\cdot)$ and $g(\cdot)$ that satisfy the sufficient conditions and make the closed-loop system DSS.

\section{CONTROLLER DESIGN} \label{sec:control}
We consider the system \eqref{eq:cloriginal} in closed-loop with the controller \eqref{eq:ac_controller}-\eqref{eq:integraldynamics}. In order to reject constant disturbances, we augment the system with the integral state $\zeta_i \in \realnumbers$,
\begin{equation}
\dot{z}_i = \Phi_i z_i + \rho_i + d_i
\label{eq:claugmented},
\end{equation}
where $z_i = [x_i^T \; \zeta_i]^T$ is the augmented state vector. The control input is $\rho_i = [0 \; 0 \; (\tau_i^{-1}\bar{u}_i) \; u_{\zeta,i}]^T$, where $u_{\zeta,i} = {g}_{i,i-1}(\cdot) + \varepsilon_i{g}_{i,i+1}(\cdot) + {g}_i^0(\cdot)$. The dynamics matrix of the augmented system is
\begin{align}
\Phi_i = \begin{bmatrix}
0 & 1 & 0 & 0 \\ 0 & 0 & m^{-1}_i & 0 \\ 0 & 0 & -\tau_i^{-1} & 0 \\ 0 & 0 & 0 & 0
\end{bmatrix}.
\end{align}

Similar to the approach in \cite{Silva2020}, we use the coordinate change $\xi_i = k^{-1}\zeta_i + \bar{w}_i$ to incorporate the constant disturbance $\bar{w}_i$ into the state vector, which results in the modified augmented system below,
\begin{equation}
\dot{y}_i = \Phi_iy_i + \rho_i + \eta_i
\label{eq:claugmented-coordchange},
\end{equation}
where $y_i = [x_i^T \; \xi_i]^T$, with desired configuration $y_i^\star$, and $\eta_i = [0 \; w_i \; 0 \; 0]^T$ contains only the time-varying disturbance $w_i$. It is useful to write 
$\rho_i = H_{i,i-1} + \varepsilon_iH_{i,i+1} + H^{0}_{i}$ where 
$H_{i,i-1} = [0 \; 0\; h_{i,i-1} \; g_{i,i-1}]^T$, 
$H_{i,i+1} = [0 \; 0 \; h_{i,i+1} \; g_{i,i+1}]^T$, and 
$H^{0}_{i} = [0 \; 0 \; ({h^{0}_{i}} + k\zeta_i) \; {g^{0}_{i}}]^T$. Note that we suppressed the state dependency for simplicity.

The direct application of the sufficient conditions of DSS in \cite{Monteil2019} is generally difficult as the linear matrix inequalities (LMIs) obtained from the sufficient conditions cannot be easily solved due to a lack of structure in the matrices. To overcome this difficulty, we propose a state transformation for the augmented system \eqref{eq:claugmented-coordchange} as follow
\begin{align}\label{eq:transformedstate}
\tilde y_i = T\, y_i 
\end{align}
with 
\begin{align}
T = \begin{bmatrix}
1 & \alpha_1 & 0 & 0 \\ 0 & 1 & \alpha_2 & \alpha_3 \\ 0 & 0 & 1 & \alpha_4 \\ 0 & 0 & 0 & 1
\end{bmatrix}, 
\end{align}
where $\alpha_1$, $\alpha_2$, $\alpha_3$ and $\alpha_4$ are coupling constants that are fundamental for structuring the LMIs, and facilitate finding a solution that verifies the condition of Theorem \ref{thm:conditionsDSS} using optimisation tools. Applying the transformation $T$ and using the new states $\tilde y_i$, we can write the dynamics \eqref{eq:claugmented-coordchange} in the form
\begin{align}
\dot{\tilde{y}}_i =T\Phi_iT^{-1}\tilde{y}_i+\tilde{\rho}_i + \tilde{\eta}_i,
\label{eq:cltransformed}
\end{align}
where $\tilde{\rho}_i = T\rho_i$, and $\tilde{\eta}_i = T\eta_i$, and the transformed desired configuration is $\tilde{y}_i^\star = Ty_i^\star$.

\begin{lemma}
First consider the system \eqref{eq:cloriginal} in closed-loop with controller \eqref{eq:ac_controller}, without integral action, that is $k=0$. In that case, provided the sufficient conditions in \cite{Monteil2019} are satisfied, the system is DSS and the following estimate is true,
\begin{equation}\label{eq:boundcormonteil}
	\begin{split}
	\sup_i\left\lvert x_i(t)-x_i^\star(t) \right\rvert_2 &\leq e^{-\bar{c}^2t}\sup_i\left\lvert x_i(0)-x_i^\star(0) \right\rvert_2 \\&+ \frac{1-e^{-\bar{c}^2t}}{\bar{c}^2}\sup_i\left\lVert d_i(t) \right\rVert_{\infty},
	\end{split}
\end{equation}
where $\bar{c}^2 = c^2-b(1+\max_i\varepsilon_i)$.
\end{lemma}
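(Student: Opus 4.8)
The plan is to recognize that this Lemma is essentially a restatement of the main DSS result from \cite{Monteil2019} applied to the specific closed-loop system \eqref{eq:cloriginal} with $k=0$, and then to extract the explicit exponential estimate \eqref{eq:boundcormonteil} from the Lyapunov-type argument underlying those sufficient conditions. The sufficient conditions in \cite{Monteil2019} typically guarantee the existence of a quadratic Lyapunov-like function $V_i(e_i) = e_i^T P e_i$ for the error variable $e_i = x_i - x_i^\star$ (possibly after the state transformation described above, though for $k=0$ the fourth coordinate is absent), together with coupling and contraction conditions parametrized by constants $c$ and $b$. First I would write down the error dynamics $\dot e_i = \phi_i(x_i) - \phi_i(x_i^\star) + u_i + d_i$ and substitute the controller \eqref{eq:ac_controller} with $k=0$, so that the interconnection terms appear explicitly with the $\varepsilon_i$ weights on the downstream coupling.

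Next I would invoke the certificate from \cite{Monteil2019}: there is a norm (here the Euclidean norm, as the statement uses $|\cdot|_2$) in which the diagonal part of the error dynamics contracts at rate $c^2$ while each neighbour coupling contributes at most $b$ (and the downstream coupling is scaled by $\varepsilon_i$), so that along trajectories
\begin{align}
\frac{d}{dt}\sup_i\left\lvert e_i(t)\right\rvert_2 \le -\bar c^2 \sup_i\left\lvert e_i(t)\right\rvert_2 + \sup_i\left\lVert d_i(t)\right\rVert_\infty,
\label{eq:compineq}
\end{align}
with $\bar c^2 = c^2 - b(1+\max_i\varepsilon_i)$; the hypothesis that the sufficient conditions hold is exactly what makes $\bar c^2 > 0$. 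The passage from the per-agent Lyapunov inequalities to the $\sup_i$ differential inequality \eqref{eq:compineq} is the standard max-over-subsystems / Danskin-type argument: at a time where the supremum is attained by index $i^\star$, the upper Dini derivative of the supremum is bounded by the derivative of that agent's term, and the coupling contributions are themselves dominated by the supremum. I would then apply the comparison lemma (Grönwall) to \eqref{eq:compineq}: if $s(t)$ satisfies $\dot s \le -\bar c^2 s + D$ with $D = \sup_i\lVert d_i\rVert_\infty$ constant in the bound, then $s(t) \le e^{-\bar c^2 t} s(0) + \frac{1 - e^{-\bar c^2 t}}{\bar c^2} D$, which is precisely \eqref{eq:boundcormonteil} after identifying $s(t) = \sup_i|x_i(t) - x_i^\star(t)|_2$.

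The main obstacle, and the step requiring the most care, is the derivation of the scalar comparison inequality \eqref{eq:compineq} with the \emph{specific} constant $\bar c^2 = c^2 - b(1+\max_i\varepsilon_i)$: one must track how the contraction rate $c^2$ of the isolated error dynamics is degraded by both the upstream coupling (weight $1$) and the downstream coupling (weight $\varepsilon_i$, bounded by $\max_i\varepsilon_i$), and confirm that the bound is uniform in $i$ and independent of $N$, which is what yields string stability rather than merely asymptotic stability of the finite interconnection. Everything else — the exponential form of the estimate, matching it to Definition \ref{def:ss} with $\gamma(r,t) = e^{-\bar c^2 t} r$ and $\beta(r) = r/\bar c^2$ — is routine once \eqref{eq:compineq} is in hand. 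I would also note that the $\zeta_i$ and $\xi_i$ coordinates and the transformation $T$ play no role here since $k=0$ decouples the integral state; they become relevant only in the subsequent theorem.
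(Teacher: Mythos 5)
Your proposal is correct and, in substance, reconstructs exactly the argument the paper relies on: the paper's own ``proof'' of this lemma is simply a deferral to the cited reference, whose result is the matrix-measure/contraction estimate you describe (diagonal blocks with measure at most $-c^2$, neighbour couplings bounded by $b$ and $\varepsilon_i b$, a Dini-derivative bound on the supremum over agents, and the comparison lemma yielding the exponential estimate with $\bar{c}^2 = c^2 - b(1+\max_i\varepsilon_i)$). Your additional observations --- that condition C3 is precisely $\bar{c}^2>0$, and that the integral state and the transformation $T$ are irrelevant when $k=0$ --- are consistent with how the paper uses this lemma as the baseline for Theorem~\ref{thm:conditionsDSS}.
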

\begin{proof}
See \cite{Monteil2019}.
\end{proof}

We now present some modified sufficient conditions for DSS as a tool to find controllers that turn the closed-loop system DSS.

\begin{theorem}[Sufficient Conditions for DSS]
	\label{thm:conditionsDSS}
	Consider the system \eqref{eq:claugmented-coordchange} with controller \eqref{eq:ac_controller}-\eqref{eq:integraldynamics}. If the controller functions ${h}(\cdot)$ and ${g}(\cdot)$ are such that the following conditions are satisfied,
	\begin{description}
		\item[C1] $ {H}_{i,i-1}(y^\star_{i},y^\star_{i-1}) = 0$, ${H}_{i,i+1}(y^\star_{i},y^\star_{i+1}) = 0$, and ${H}^{0}_{i}(t,y^\star_i,x_0) = 0$;
		\item[C2] for some $c\neq 0$ and $b>0$ \begin{align}\label{eq:conditionsDSS}
        	\begin{split}
        	&\mu_2\left(
        	J_{i,i}
        	\right)\leq -c^2, \\
        	&\max\left\{\left\lVert
        	J_{i,i-1}
        	\right\rVert_2,\left\lVert
        	J_{i,i+1}
        	\right\rVert_2\right\}\leq b, \\
        	& \text{for all } y_i, y_{i-1}, y_{i+1} \in \mathbb{R}^{4};
        	\end{split}
    	\end{align}
		\item[C3] $
		\varepsilon_i < \frac{c^2}{b}-1
		$,
	\end{description}
	where $\mu_2(A)=\max_i\left(\lambda_i[A]_s\right)$, $[A]_s$ is the symmetric part of $A$, and the elements $J_{i,i}\in \realnumbers^{4\times4}$ and $J_{i,i\pm 1}\in \realnumbers^{4\times4}$ of the Jacobian $J \in \realnumbers^{4N\times4N}$ are 
	\begin{align}
    J_{i,i} &= T \Phi_i T^{-1} + T\dfrac{\partial {\rho}_i}{\partial y_i}T^{-1}, \\
    J_{i,i\pm 1} &= T\dfrac{\partial H_{i\pm 1}}{\partial y_{i\pm 1}}T^{-1}.
    \end{align}
	
	Then,
	
	(i) The system \eqref{eq:claugmented-coordchange} is DSS with,
	\begin{equation}\label{eq:estimatedss}
	\begin{split}
	\sup_i\left\lvert y_i(t)-y_i^\star(t) \right\rvert_2 &\leq Ke^{-\bar{c}^2t}\sup_i\left\lvert y_i(0)-y_i^\star(0) \right\rvert_2 \\&+ K\frac{1-e^{-\bar{c}^2t}}{\bar{c}^2}\sup_i\left\lVert {w}_i(t) \right\rVert_{\infty}.
	\end{split}
	\end{equation}
	
	(ii) The system \eqref{eq:cloriginal} is also DSS with,
	\begin{equation}\label{eq:estimatedss_org}
	\begin{split}
	\sup_i\left\lvert x_i(t)-x_i^\star(t) \right\rvert_2 &\leq Ke^{-\bar{c}^2t}\sup_i\left\lvert x_i(0)-x_i^\star(0) \right\rvert_2 \\&+ Ke^{-\bar{c}^2t}\sup_i\left\lvert \zeta_i(0)+k^{-1}\bar{w}_i \right\rvert_2 \\&+ K\frac{1-e^{-\bar{c}^2t}}{\bar{c}^2}\sup_i\left\lVert {w}_i(t) \right\rVert_{\infty}
	\end{split}
	\end{equation}
	where $\bar{c}^2 = c^2-b(1+\max_i\varepsilon_i)$, $K=\dfrac{\max_i(\sigma_{\max}(T))}{\min_i(\sigma_{\min}(T))}$, and $\sigma_{\min}(A)$ and $\sigma_{\max}(A)$ denote the minimum and maximum singular value of $A$.
\end{theorem}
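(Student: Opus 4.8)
The plan is to obtain part~(i) by applying the disturbance string stability result of \cite{Monteil2019} to the transformed interconnection \eqref{eq:cltransformed}, and then to deduce part~(ii) from (i) by projecting onto the first three coordinates and undoing the integral coordinate change $\xi_i = k^{-1}\zeta_i + \bar w_i$.

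For (i), I would first verify that \eqref{eq:cltransformed} has exactly the interconnection structure to which \cite{Monteil2019} applies. Writing $y_i = T^{-1}\tilde y_i$, the drift $T\Phi_i T^{-1}\tilde y_i + TH^0_i$ depends only on $\tilde y_i$ (and on $t$ and $x_0$), while the couplings $TH_{i,i-1}$ and $\varepsilon_i TH_{i,i+1}$ depend only on $(\tilde y_i,\tilde y_{i-1})$ and $(\tilde y_i,\tilde y_{i+1})$, so the transformed system is again a chain interconnection; by the chain rule its diagonal and off-diagonal Jacobian blocks are precisely the $J_{i,i}$ and $J_{i,i\pm 1}$ of the statement. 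Then C1 makes $\tilde y_i^\star = Ty_i^\star$ a solution of the unforced transformed dynamics, C2 is the matrix-measure / norm bound on those blocks, and C3 is the asymmetry bound --- that is, exactly the hypotheses of \cite{Monteil2019}. Invoking that result (the statement underlying the preceding Lemma, now applied to \eqref{eq:cltransformed}) yields
\begin{equation*}
\sup_i\left\lvert\tilde y_i(t)-\tilde y_i^\star(t)\right\rvert_2 \le e^{-\bar c^2 t}\sup_i\left\lvert\tilde y_i(0)-\tilde y_i^\star(0)\right\rvert_2 + \frac{1-e^{-\bar c^2 t}}{\bar c^2}\sup_i\left\lVert\tilde\eta_i(t)\right\rVert_\infty .
\end{equation*}
The next step is to push this estimate back through $T$: using $\tilde y_i=Ty_i$, $\tilde y_i^\star=Ty_i^\star$, $\tilde\eta_i=T\eta_i$, the singular-value inequalities $\sigma_{\min}(T)\lvert v\rvert_2\le\lvert Tv\rvert_2\le\sigma_{\max}(T)\lvert v\rvert_2$, and $\lVert\eta_i\rVert_\infty=\lVert w_i\rVert_\infty$ (because $\eta_i=[0\;w_i\;0\;0]^T$), one divides the left-hand side by $\sigma_{\min}(T)$ and bounds the right-hand side by $\sigma_{\max}(T)$ times the corresponding $y_i$ quantities, obtaining \eqref{eq:estimatedss} with $K=\max_i\sigma_{\max}(T)/\min_i\sigma_{\min}(T)$.

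For (ii), since $x_i-x_i^\star$ consists of the first three entries of $y_i-y_i^\star$, we have $\lvert x_i(t)-x_i^\star(t)\rvert_2\le\lvert y_i(t)-y_i^\star(t)\rvert_2$, so \eqref{eq:estimatedss} already controls the left-hand side of \eqref{eq:estimatedss_org}. It then remains to rewrite the right-hand side in the original variables: on the initial-condition term I would use $\lvert y_i(0)-y_i^\star(0)\rvert_2\le\lvert x_i(0)-x_i^\star(0)\rvert_2+\lvert\xi_i(0)-\xi_i^\star(0)\rvert_2$ and substitute $\xi_i(0)=k^{-1}\zeta_i(0)+\bar w_i$ together with $\xi_i^\star=0$, which turns $\xi_i(0)-\xi_i^\star(0)$ into a constant multiple of $\zeta_i(0)+k^{-1}\bar w_i$; the disturbance term is unchanged because $\eta_i$ carries only $w_i$. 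Splitting the two initial-condition contributions into separate $Ke^{-\bar c^2 t}$ terms then produces \eqref{eq:estimatedss_org}.

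The hard part is bookkeeping rather than conceptual machinery. The delicate point is that the similarity transform must commute with the Jacobian computation, so that the blocks appearing in C2 are genuinely $T(\cdot)T^{-1}$ --- which is exactly why C2 is imposed directly on the transformed blocks, so that $\mu_2(\cdot)$ and $\lVert\cdot\rVert_2$ transfer verbatim without introducing conditioning factors --- and one must check that the forward-coupling weight $\varepsilon_i$ is absorbed by C3 precisely as in \cite{Monteil2019}. A secondary nuisance is pinning down $\xi_i^\star$ and the exact constant multiplying $\zeta_i(0)+k^{-1}\bar w_i$ in \eqref{eq:estimatedss_org}: once $\xi_i^\star$ is identified from C1 as the equilibrium of the $\xi$ dynamics, that term is of the stated form up to a factor that can be folded into $K$.
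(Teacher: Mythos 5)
Your proposal is correct and follows essentially the same route as the paper's own proof: apply the sufficient conditions of \cite{Monteil2019} to the transformed chain \eqref{eq:cltransformed}, transfer the resulting estimate back through $T$ via the singular-value bounds to get \eqref{eq:estimatedss}, and then obtain \eqref{eq:estimatedss_org} by projecting onto the $x_i$ components and splitting the initial-condition term with the triangle inequality and the integral coordinate change. Your closing remarks on pinning down $\xi_i^\star$ and the exact constant in front of $\zeta_i(0)+k^{-1}\bar w_i$ are, if anything, more careful than the paper's own treatment of that step.
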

\begin{proof}	
As shown in the previous section, the dynamics of the system \eqref{eq:claugmented} in closed-loop with the controller \eqref{eq:ac_controller}-\eqref{eq:integraldynamics} can be equivalently written, using the transformation \eqref{eq:transformedstate}, as the dynamics \eqref{eq:cltransformed}.

The application of the sufficient conditions in \cite{Monteil2019} to the system \eqref{eq:cltransformed} can be written as \eqref{eq:conditionsDSS}. These conditions are obtained by computing the Jacobian matrices $J_{i,i}$ and $J_{i,i\pm 1}$ from  \eqref{eq:cltransformed} and writing them in terms of $y_i$. As the conditions C1, C2, and C3 are satisfied then the system \eqref{eq:cltransformed} is DSS, which ensures that the following inequality holds true,
\begin{equation}\label{eq:estimatedss_transf}
	\begin{split}
	\sup_i\left\lvert \tilde{y}_i(t)-\tilde y_i^\star(t) \right\rvert_2 &\leq e^{-\bar{c}^2t}\sup_i\left\lvert \tilde y_i(0)-\tilde y_i^\star(0) \right\rvert_2 \\&+ \frac{1-e^{-\bar{c}^2t}}{\bar{c}^2}\sup_i\left\lVert {\tilde \eta}_i(t) \right\rVert_{\infty}.
	\end{split}
\end{equation}

To prove (i), we define $\Lambda_i \triangleq T^{T}T$ and define $\lambda(\Lambda)=\sigma(T)^2$, where $\lambda(\Lambda)$ is the vector of eigenvalues of $\Lambda$, then use that to obtain the following bounds based on the quadratic form $y_i^T\Lambda y_i$,
\begin{align}
	\begin{split}
	\underline{\sigma}\sup_i\left\lvert y_i(t)-y_i^\star(t) \right\rvert_2 &\leq \sup_i\left\lvert \tilde{y}_i(t)-	\tilde{y}_i^\star(t) \right\rvert_2, \\
	\sup_i\left\lvert \tilde{y}_i(0)-	\tilde{y}_i^\star(0) \right\rvert_2 &\leq \bar{\sigma} \sup_i\left\lvert y_i(0)-y_i^\star(0) \right\rvert_2, \\
	\sup_i\left\lVert \tilde{\eta}_i(t) \right\rVert_{\infty} &\leq \bar{\sigma}\left\lVert \eta_i(t) \right\rVert_{\infty},
	\label{eq:bounds}
	\end{split}
\end{align}
where $\bar{\sigma} = \max_i\{\sigma_{\max}(T)\}$ and $\underline{\sigma} = \min_i\{\sigma_{\min}(T)\}$. 
Using \eqref{eq:bounds} in \eqref{eq:estimatedss_transf}, and as $\sup_i \lVert \eta_i(t) \rVert_{\infty} = \sup_i\lVert w_i(t) \rVert_{\infty} $, we obtain the result \eqref{eq:estimatedss}.

To prove (ii), that is, that system \eqref{eq:cloriginal} is DSS with estimate \eqref{eq:estimatedss_org}, we first note that since $y_i = [x_i^T \; \xi_i]^T$, then we can write
\begin{equation}\label{eq:bound1}
	    \sup_i\left\lvert x_i(t)-x_i^\star(t) \right\rvert_2 \leq \sup_i\left\lvert y_i(t)-y_i^\star(t) \right\rvert_2.
\end{equation}

Then, using \eqref{eq:estimatedss} in \eqref{eq:bound1}, we obtain
\begin{equation}
	\begin{split}
	\sup_i\left\lvert x_i(t)-x_i^\star(t) \right\rvert_2 &\leq \sup_i\left\lvert y_i(t)-y_i^\star(t) \right\rvert_2 \\
	&\leq Ke^{-\bar{c}^2t}\sup_i\left\lvert y_i(0)-y_i^\star(0) \right\rvert_2 \\ &+ K\frac{1-e^{-\bar{c}^2t}}{\bar{c}^2}\sup_i\left\lVert w_i(t) \right\rVert_{\infty}.
	\end{split}
	\label{eq:bound2}
\end{equation}
Also, the triangle inequality allows us to write
\begin{equation}
    \begin{split}
        \sup_i\left\lvert y_i(0) - y_i^\star(0) \right\rvert_2 &\leq \sup_i\left\lvert x_i(0) - x_i^\star(0) \right\rvert_2 \\ &+ \sup_i\left\lvert \xi_i(0) \right\rvert_2,
    \end{split}
    \label{eq:triangleineq}
\end{equation}
which can be used in \eqref{eq:bound2} together with $\xi(0) = \zeta_i(0) + k^{-1}\bar{w}_i$ to finally obtain \eqref{eq:estimatedss_org}, which completes the proof.
\end{proof}

It is important to note that to use Theorem \ref{thm:conditionsDSS} we need only to compute the matrices $J_{i,i}$, $J_{i,i\pm 1}$, and $T$, and find functions $h(\cdot)$ and $g(\cdot)$ that satisfy the sufficient conditions C1, C2, and C3. Then, the system \eqref{eq:cloriginal} in closed-loop with the  controller \eqref{eq:ac_controller}-\eqref{eq:integraldynamics} is DSS and also rejects constant disturbances.

\section{SIMULATION STUDIES} \label{sec:results}
We consider the system \eqref{eq:cloriginal} in closed-loop with the controller  \eqref{eq:ac_controller}-\eqref{eq:integraldynamics}, with
\begin{align}
\begin{split}
{h}_{i,i-1} &= h^p_i(q_{i-1}-q_i-\delta_{i,i-1}) + K^v_i(\dot{q}_{i-1}-\dot{q}_i), \\
{h}_{i,i+1} &= h^p_i(q_{i+1}-q_i+\delta_{i+1,i}) + K^v_i(\dot{q}_{i+1}-\dot{q}_i), \\
{h}^{0}_{i} &= K^{p0}_{i}(q_{0}-q_i-\delta_{i,0}) + K^{v0}_{i}(\dot{q}_0-\dot{q}_i),
\end{split}
\end{align}
and 
\begin{align}
\begin{split}
{g}_{i,i-1} &= g^p_{i}(q_{i-1}-q_i-\delta_{i,i-1}) + G^v_{i}(\dot{q}_{i-1}-\dot{q}_i),\\
{g}_{i,i+1} &= g^p_{i}(q_{i+1}-q_i+\delta_{i+1,i}) + G^v_{i}(\dot{q}_{i+1}-\dot{q}_i),\\
{g}^{0}_{i} &= G^{q0}_{i}({q}_0-{q}_i-\delta_{i,0}) + G^{v0}_{i}(\dot{q}_0-\dot{q}_i).
\end{split}
\end{align}

The structure of the controller requires absolute position and velocity, relative position and velocity, and reference information to compute the control input for each agent.

If the nonlinear functions $h^p_i(\cdot)$ and $g^p_i(\cdot)$ have lower and upper bounds, so does the Jacobian as it depends linearly on their partial derivatives \cite{Monteil2019}. The Jacobian $J$ upper and lower bounds are $J_L$ and $J_U$ respectively, leaving us with the following LMIs, from C2 and C3,
\begin{gather*}
[J_{i,i,L}]_s \leq -c^2I_4, \qquad
[J_{i,i,U}]_s \leq -c^2I_4, \\	
\begin{bmatrix}
bI_4 & J_{i,i-1,L} \\ J^T_{i,i-1,L} & bI_4
\end{bmatrix} \geq 0, \quad
\begin{bmatrix}
bI_4 & J_{i,i-1,U} \\ J^T_{i,i-1,U} & bI_4
\end{bmatrix} \geq 0, \\
\begin{bmatrix}
bI_4 & J_{i,i+1,L} \\ J^T_{i,i+1,L} & bI_4
\end{bmatrix} \geq 0, \quad
\begin{bmatrix}
bI_4 & J_{i,i+1,U} \\ J^T_{i,i+1,U} & bI_4
\end{bmatrix} \geq 0.
\end{gather*} where $I_4$ is the 4-by-4 identity matrix and $[J_{i,i,L}]_s \leq -c^2I_4$ is equivalent to $\mu_2(J) \leq c^2$ \cite{Monteil2019}. 

It is guaranteed, through Theorem \ref{thm:conditionsDSS}, that the system \eqref{eq:cloriginal} will be DSS under the nonlinear controller \eqref{eq:ac_controller}-\eqref{eq:integraldynamics} with control gains that satisfy the LMIs above.

We used CVX, a package for specifying and solving convex programs \cite{cvx}, to find a controller by posing an optimisation problem minimising $-\bar{c}^2$, with the LMIs above as constraints. To solve the LMIs, we set $\varepsilon_i = 1$, $\tau_i = 1 \text{ s}$, $m_i = 1$ for all $i = \{1,\dots,N\}$, and the coupling constants  
$\alpha_1=0.8$, 
$\alpha_2=1$, 
$\alpha_3=0.7$, 
and $\alpha_4=-0.5$. We found a set of controller gains that satisfy the conditions C1, C2 and C3, where 
$K^p_{i1} = K^p_{i2} = 0.001$, 
$K^v_i	= 0.001$, 
$K^{p0}_{i}= 0.4631$, 
$K^{v0}_{i}= 0.7$, 
$k   = 0.1436$, 
$G^p_{i1} = G^p_{i2} = 0.001$, 
$G^v_{i} = 0.001$, 
$G^{p0}_{i}= 0.1430$, and 
$G^{v0}_{i}= 0.3082$.

With the system \eqref{eq:cloriginal} in closed-loop with controller \eqref{eq:ac_controller}-\eqref{eq:integraldynamics}, we ran an exhaustive number of simulation studies for the controller proposed in this paper and the controller obtained using \cite[Corollary 1]{Monteil2019}, denoted $C_1$ and $C_2$, respectively.

We ran the simulations for platoons with lengths 
$N=[50,150,\dots,500]$, with initial conditions $x_i(0) = [(q_0(0)-\delta_{i,0}+\Gamma_i) \; (\dot{q}_0(0) + \Gamma_i) \; 0]^T$, inter-vehicle spacing $\delta_{i,i-1} = \delta_{i,i+1} = 10\text{ m}$ and reference speed $\dot{q}_0 = 20\text{ m/s}$. The disturbance is decomposed into time-variant $w_i(t) = \Gamma_i\sin\exp(-0.1t) \text{ m/s}^2$ and constant $\bar{w}_i = (1+\Gamma_i) \text{ m/s}^2$ disturbances, where $\Gamma_i$ is uniformly randomly generated in the interval $[0,1]$.

\subsection{Actuator dynamics as designed}
Let us consider the case where all vehicles' actuator dynamics have time constants equal to the time constant for which the controller $C_1$ was designed, that is $\tau_i = 1$ for all $i = \{1,\dots,N\}$.

Fig. \ref{fig:normstimetconst1} shows the bounds generated by Theorem \ref{thm:conditionsDSS}, that is bound \eqref{eq:estimatedss_org}, the bounds obtained with \cite[Corollary 1]{Monteil2019}, that is bound \eqref{eq:boundcormonteil}, and the state errors obtained with their respective controllers, $C_1$ and $C_2$, for a platoon with $N=500$ vehicles. We note that the state error of the platoon under $C_1$ converges to zero, showing that the closed-loop is DSS whilst constant disturbances are rejected. 
Also, Fig. \ref{fig:normsN} shows that the supreme state error norm of all platoons is not affected by the string length and the closed-loop system is always DSS.

\begin{figure}[tb]
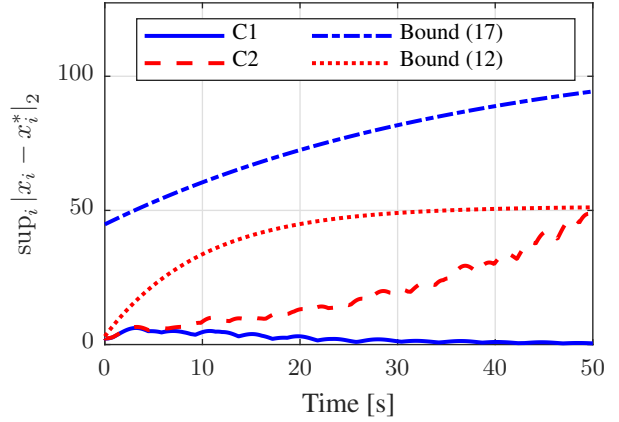

	\centering
	\psfragfig{figs/200330_1930normstime_500}
	\caption{Supreme $L_2$-norm state error for platoon with $N=500$ vehicles and $\tau_i = 1$, for both controllers $C_1$ ({\textcolor{blue}{---}}) and $C_2$ ({\textcolor{red}{-- --}}), and their respective bounds \eqref{eq:estimatedss_org} ({$ \textcolor{blue}{\cdot~-}$}) and \eqref{eq:boundcormonteil} ({$ \textcolor{red}{\cdots}$}).}
	\label{fig:normstimetconst1}
\end{figure}

\begin{figure}[tb]
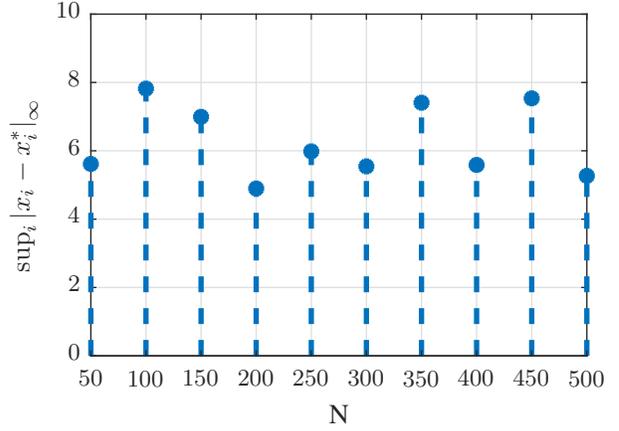

	\centering
	\psfragfig{figs/200330_2049normsN_500}
	\caption{Supreme infinity-norm state error of platoons of different lengths $N$ under controller $C_1$.}
	\label{fig:normsN}
\end{figure}

To facilitate visualisation, we show the states time histories of vehicles $i={2, 100, 250, 400, 500}$ of the platoon of 500 vehicles. Fig. \ref{fig:displacements10} shows the displacement between the vehicle $i=1$ and the reference vehicle $x_0$, that is $e_{i,i-1}=q_{i-1}-q_{i}-\delta_{i,i-1}$, for both controllers $C_1$ and $C_2$, where we notice that only controller $C_1$ compensates for the constant disturbances whilst reducing oscillations caused by the actuator dynamics. Fig. \ref{fig:displacementsii-1} shows the displacement of all other vehicles and their predecessor neighbours where it is again possible to see the constant disturbances being rejected by controller $C_1$. In Fig. \ref{fig:vel_accel}, we show that, with controller $C_1$, good performance is achieved for both velocity and actuator force states, with reasonable force values. The integral states and control inputs are shown in Fig. \ref{fig:integralstate_control}, where we see the integral state converge to a value proportional to the constant disturbance with the control input sustaining reasonable values.

\begin{figure}[tb]
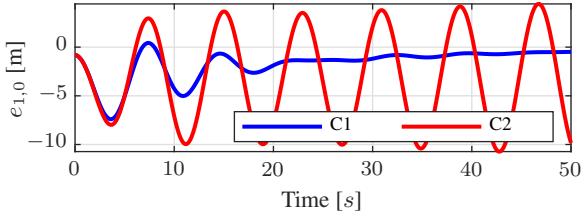

	\centering
	\psfragfig{figs/200330_2210displacements10_500}
	\caption{Displacement of vehicle $i=1$ to the reference $x_0$ for both controllers $C_1$ (blue, continuous line) and $C_2$ (red, dashed line), for the platoon with $N=500$ vehicles.}
	\label{fig:displacements10}
\end{figure}

\begin{figure}[tb]
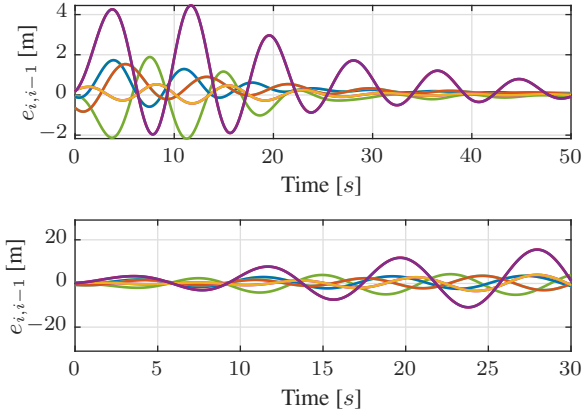

	\centering
	\psfragfig{figs/200330_2210displacementsii-1_500}
	\caption{Displacement of vehicles 2 (\textcolor{mycolor1}{\bf--}), 100 (\textcolor{mycolor2}{\bf--}), 250 (\textcolor{mycolor3}{\bf--}), 400 (\textcolor{mycolor4}{\bf--}), and 500 (\textcolor{mycolor5}{\bf--}) to their predecessors, for controller $C_1$ (top) and $C_2$ (bottom), for the platoon with $N=500$ vehicles.}
	\label{fig:displacementsii-1}
\end{figure}

\begin{figure}[tb]
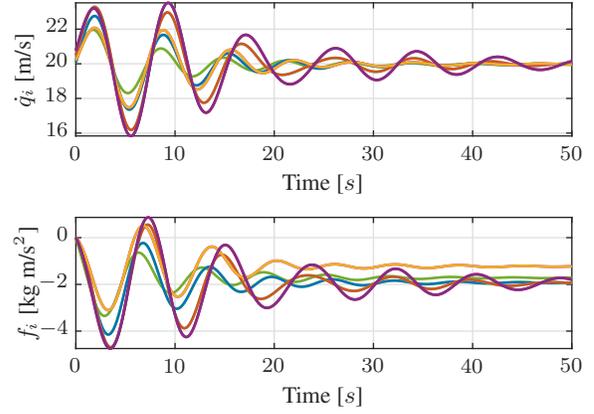

	\centering
	\psfragfig{figs/200330_2210vel_accel_500}
	\caption{Velocities and actuator forces of vehicles 2 (\textcolor{mycolor1}{\bf--}), 100 (\textcolor{mycolor2}{\bf--}), 250 (\textcolor{mycolor3}{\bf--}), 400 (\textcolor{mycolor4}{\bf--}), and 500 (\textcolor{mycolor5}{\bf--}) for the platoon with $N=500$ under controller $C_1$.}
	\label{fig:vel_accel}
\end{figure}

\setlength{\textfloatsep}{16pt}

\begin{figure}[tb]
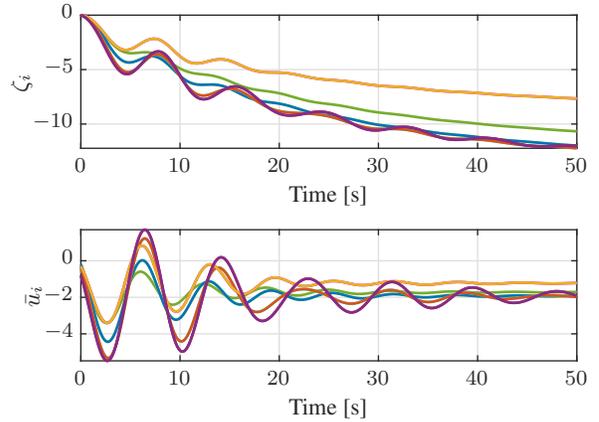

	\centering
	\psfragfig{figs/200330_2210integralstate_control_500}
	\caption{Integral state and control input of vehicles 2 (\textcolor{mycolor1}{\bf--}), 100 (\textcolor{mycolor2}{\bf--}), 250 (\textcolor{mycolor3}{\bf--}), 400 (\textcolor{mycolor4}{\bf--}), and 500 (\textcolor{mycolor5}{\bf--}) for the platoon with $N=500$ under controller $C_1$.}
	\label{fig:integralstate_control}
\end{figure}

\subsection{Actuator dynamics different than designed}
In this section, we simulate platoon systems with actuators dynamics' time constants different than the one for which the controller was designed. Exhaustive simulation studies indicate that when all vehicles' dynamics have time constants less than or equal to the time constant for which the control was designed, the condition C2 is satisfied and the closed-loop system is DSS. However, the analytical proof of that is pending. 

In Fig. \ref{fig:normstime} we show the state errors and bounds for a platoon of length $N=500$ and $\tau_i = 0.5 (1.1 - \Gamma_i)$. In this study, we observe that both controllers make the closed-loop DSS, but $C_1$ is also able to compensate for the constant disturbance.

\begin{figure}[tb]
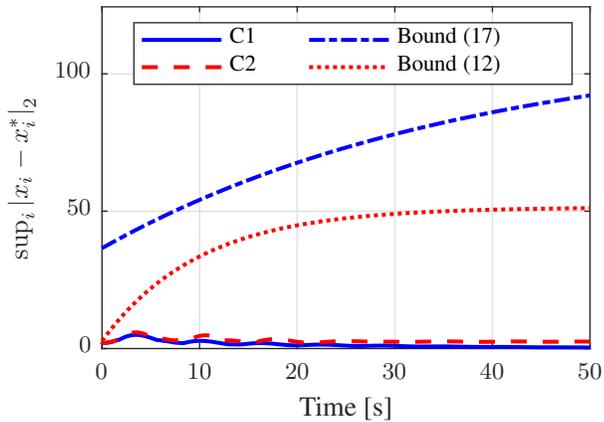

	\centering
	\psfragfig{figs/200330_2018normstime_500}
	\caption{Supreme $L_2$-norm state error for platoon with $N=500$ vehicles and $\tau_i = 0.5 (1.1 - \Gamma_i)$, for both controllers $C_1$ ({\textcolor{blue}{---}}) and $C_2$ ({\textcolor{red}{-- --}}), and their respective bounds \eqref{eq:estimatedss} ({$ \textcolor{blue}{\cdot~-}$}) and \eqref{eq:boundcormonteil} ({$ \textcolor{red}{\cdots}$}).}
	\label{fig:normstime}
\end{figure}

Finally, Fig. \ref{fig:normstime_unstable} shows the state errors and bounds for the platoon with $N=500$ vehicles and $\tau_i = 1.5 (1.1 - \Gamma_i)$, a time constant bigger than the one for which the controller was designed. We observe the condition C2 is not satisfied and the system is unstable.

\begin{figure}[tb]
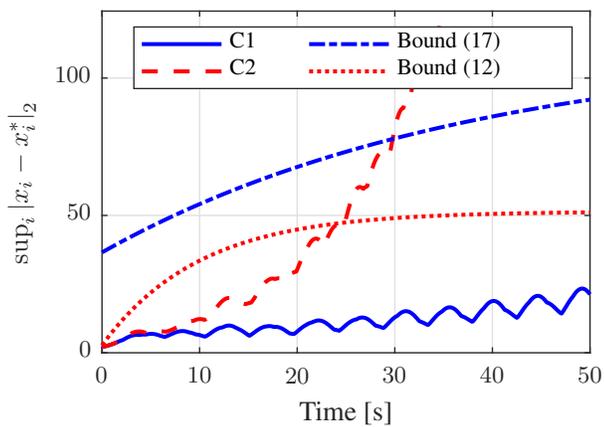

	\centering
	\psfragfig{figs/200330_2021normstime_500}
	\caption{Supreme infinity-norm state error for platoon with $N=500$ vehicles and $\tau_i = 1.5 (1.1 - \Gamma_i)$, for both controllers $C_1$ ({\textcolor{blue}{---}}) and $C_2$ ({\textcolor{red}{-- --}}), and their respective bounds \eqref{eq:estimatedss_org} ({$ \textcolor{blue}{\cdot~-}$}) and \eqref{eq:boundcormonteil} ({$ \textcolor{red}{\cdots}$}).}
	\vspace{-0.2cm}
	\label{fig:normstime_unstable}
\end{figure}

\section{CONCLUSIONS} \label{sec:conclusions}

In this paper we designed a string stable integral controller capable of rejecting disturbances in bidirectional platoons of heterogeneous vehicles with distinct actuator dynamics. During control design, we prescribed the actuator dynamics time constant so the controller guarantees disturbance string stability of the platoon, regardless of its length. 
The controller gains can be computed via offline optimisation before the implementation.
Simulation studies show satisfactory performance of the control system and indicates that disturbance string stability is ensured provided that the actuator dynamics have time constants less than the prescribed value. 

\section{ACKNOWLEDGEMENTS}

G.F.S., A.M. and J.F. acknowledge continued support from the Queensland University of Technology (QUT) through the Centre for Robotics.



\bibliographystyle{IEEEtran}


\bibliography{refcdc2020}

\end{document}